\documentclass[letterpaper, 10 pt, conference]{ieeeconf}
\IEEEoverridecommandlockouts
\overrideIEEEmargins

\usepackage{graphics}
\usepackage{epsfig}
\usepackage{amsmath}
\usepackage{amssymb}
\usepackage{xcolor}

\usepackage{enumitem}
\usepackage{multirow}
\usepackage{url}

\usepackage{amsthm}
\theoremstyle{definition}

\newtheorem{theorem}{\normalfont\bfseries Theorem}

\newtheorem{definition}{\normalfont\bfseries Definition}

\newtheorem{remark}{\normalfont\bfseries Remark}
\newtheorem{example}{\normalfont\bfseries Example}

\newcommand{\R}{\mathbb{R}}
\newcommand{\X}{\mathbb{R}^n}
\newcommand{\U}{\mathbb{R}^m}
\newcommand{\C}{\mathcal{C}}
\newcommand{\K}{\mathcal{K}}
\newcommand{\Ke}{\K^{\rm e}}
\newcommand{\Kinf}{\K_\infty}
\newcommand{\Keinf}{\K_\infty^{\rm e}}
\newcommand{\grad}[1]{\nabla #1}
\renewcommand{\L}[2]{L_{#1} #2}
\newcommand{\kd}{k_{\rm d}}

\newcommand{\sat}{{\rm sat}}
\newcommand{\Kp}{K_{\rm p}}
\newcommand{\xd}{x_{\rm d}}
\newcommand{\umax}{u_{\rm max}}
\newcommand{\Cc}{\C_{\rm c}}
\newcommand{\hc}{h_{\rm c}}
\newcommand{\Lor}{L_{\cup}}
\newcommand{\Land}{L_{\cap}}

\title{\LARGE \bf
Composing Control Barrier Functions for Complex Safety Specifications
}

\author{Tamas G. Molnar and Aaron D. Ames%
\thanks{*This research is supported in part by the National Science Foundation (CPS Award \#1932091) and Nodein Inc.}%
\thanks{T. G. Molnar and A. D. Ames are with the Department of Mechanical and Civil Engineering, California Institute of Technology, Pasadena, CA 91125, USA,
{\tt\small tmolnar@caltech.edu, ames@caltech.edu}.}%
\vspace{-1mm}
}

\begin{document}

\maketitle
\thispagestyle{empty}
\pagestyle{empty}

\begin{abstract}
The increasing complexity of control systems necessitates control laws that guarantee safety w.r.t.~complex combinations of constraints.
In this letter, we propose a framework to describe compositional safety specifications with control barrier functions (CBFs).
The specifications are formulated as Boolean compositions of state constraints, and we propose an algorithmic way to create a single continuously differentiable CBF that captures these constraints and enables safety-critical control.
We describe the properties of the proposed CBF, and we demonstrate its efficacy by numerical simulations.

\end{abstract}

\section{INTRODUCTION}
\label{sec:intro}

Control designs with formal safety guarantees have long been of interest in engineering.
Safety is often captured as constraints on the system's states that must be enforced for all time by the controller.
To enable the satisfaction of state constraints with formal guarantees of safety, control barrier functions (CBFs)~\cite{AmesXuGriTab2017} have become a popular tool in nonlinear control design.
As the complexity of safety-critical control systems increases, complex combinations of multiple safety constraints tend to arise, which creates a need for controllers incorporating multiple CBFs.

The literature contains an abundance of studies on multiple safety constraints.
Some approaches directly used multiple CBFs in control design.
For example,~\cite{rauscher2016constrained, xu2018constrained} directly imposed multiple CBF constraints on the control input in optimization-based controllers;
\cite{shawcortez2022robust} synthesized controllers by switching between multiple CBFs whose superlevel set boundaries do not intersect;
\cite{tan2022compatibility} investigated the compatibility of CBFs;
\cite{breeden2023compositions} ensured feasible controllers with multiple CBFs;
and~\cite{liu2020blf, liu2021adaptive} addressed multi-objective constraints via barrier Lyapunov functions.
These works usually linked safety constraints with AND logic: they maintained safety w.r.t.~constraint~1 AND constraint~2, etc.
Other approaches combined multiple constraints into a single CBF.
These include versatile combinations, such as Boolean logic with both AND, OR and negation operations, which was established in
\cite{glotfelter2017nonsmooth, Glotfelter2021} by nonsmooth barrier functions.
Similarly,~\cite{wang2016multiobjective} used Boolean logic to create a smooth CBF restricted to a safe set in the state space;
\cite{mitchell2022adaptation} combined CBFs with AND logic via parameter adaptation;
while~\cite{lindemann2019stl, lindemann2019conflicting} used signal temporal logic to combine CBFs in a smooth manner.

In this letter, we propose a framework to capture complex safety specifications by CBFs.
We combine multiple safety constraints via Boolean logic, and propose an algorithmic way to establish a single CBF for nontrivial safety specifications.
Our method leverages both the Boolean logic from~\cite{glotfelter2017nonsmooth} and the smooth combination idea from~\cite{lindemann2019stl}, while merging the benefits of these approaches.
We address multiple levels of logical compositions of safety constraints, i.e., arbitrary combinations of AND and OR logic, which was not established in~\cite{lindemann2019stl}, while we create a continuously differentiable CBF to avoid discontinuous systems like in~\cite{glotfelter2017nonsmooth}.
Meanwhile, as opposed to~\cite{wang2016multiobjective}, the stability of the safe set is guaranteed.

In Section~\ref{sec:CBF}, we introduce CBFs and motivate multiple safety constraints.
In Section~\ref{sec:multipleCBFs}, we propose a single CBF candidate to address the compositions of multiple constraints.
We also characterize its properties, and we use simulations to demonstrate its ability to address safety-critical control with nontrivial constraints.
Section~\ref{sec:concl} closes with conclusions.


\section{CONTROL BARRIER FUNCTIONS}
\label{sec:CBF}

We consider affine control systems with state ${x \in \X}$, control input ${u \in \U}$, and dynamics:
\begin{equation}
    \dot{x} = f(x) + g(x) u,
\label{eq:system}
\end{equation}
where ${f: \X \to \X}$ and ${g: \X \to \R^{n \times m}}$ are locally Lipschitz.
Our goal is to design a controller ${k: \X \to \U}$, ${u = k(x)}$ such that the closed-loop system:
\begin{equation}
    \dot{x} = f(x) + g(x) k(x),
\label{eq:closedloop}
\end{equation}
satisfies certain safety specifications.

If $k$ is locally Lipschitz, then for any initial condition ${x(0) = x_0 \in \X}$ system~\eqref{eq:closedloop} has a unique solution ${x(t)}$, which we assume to exist for all ${t \geq 0}$.
We say that the system is safe if the solution $x(t)$ evolves inside a {\em safe set} $\C$.
Specifically, we call~\eqref{eq:closedloop} {\em safe w.r.t.~$\C$} if ${x_0 \in \C \implies x(t) \in \C}$ ${\forall t \geq 0}$.
We define the safe set as the 0-superlevel set of a continuously differentiable function ${h: \X \to \R}$:
\begin{equation}
    \C = \{x \in \X: h(x) \geq 0 \},
\label{eq:safeset}
\end{equation}
assuming it is non-empty and has no isolated points.
Later we extend this definition to more complex safety specifications.

The input $u$ affects safety through the derivative of $h$:
\begin{equation}
    \dot{h}(x,u) =
    \underbrace{\grad{h}(x) f(x)}_{\L{f}{h(x)}}
    + \underbrace{\grad{h}(x) g(x)}_{\L{g}{h(x)}} u,
\label{eq:hdot}
\end{equation}
where
$\L{f}{h}$ and $\L{g}{h}$ are the Lie derivatives of $h$ along $f$ and $g$.
By leveraging this relationship, {\em control barrier functions (CBFs)}~\cite{AmesXuGriTab2017} provide controllers with formal safety guarantees.

\begin{definition}[\cite{AmesXuGriTab2017}]
Function $h$ is a {\em control barrier function} for~\eqref{eq:system} on $\X$ if there exists ${\alpha \!\in\! \Keinf}$\footnote{Function ${\alpha : (-b,a) \to \R}$, ${a,b>0}$ is of extended class-$\K$ (${\alpha \in \Ke}$) if it is continuous, strictly increasing and ${\alpha(0)=0}$.
Function ${\alpha : \R \to \R}$ is of extended class-$\Kinf$ (${\alpha \in \Keinf}$) if ${\alpha \in \Ke}$ and ${\lim_{r \to \pm \infty} \alpha(r) = \pm \infty}$.} such that for all ${x \!\in\! \X}$:
\begin{equation}
    \sup_{u \in \U} \dot{h}(x,u) \geq - \alpha \big( h(x) \big).
\label{eq:CBF_condition}
\end{equation}
\end{definition}

\noindent Note that the left-hand side of~\eqref{eq:CBF_condition} is $\L{f}{h(x)}$ if ${\L{g}{h(x)} = 0}$ and it is $\infty$ otherwise.
Thus,~\eqref{eq:CBF_condition} is equivalent to\footnote{In \eqref{eq:CBF_condition}-\eqref{eq:CBF_condition_rewritten}, strict inequality ($>$) can also be required rather than non-strict inequality ($\geq$) to ensure the continuity of the underlying controllers~\cite{jankovic2018robust}.}:
\begin{equation}
    \L{g}{h(x)} = 0 \implies \L{f}{h(x)} + \alpha \big( h(x) \big) \geq 0.
\label{eq:CBF_condition_rewritten}
\end{equation}

Given a CBF,~\cite{AmesXuGriTab2017} established safety-critical control.
\begin{theorem}[\cite{AmesXuGriTab2017, xu2015robustnessofCBFs}] \label{theo:CBF}
\textit{
If $h$ is a CBF for~\eqref{eq:system} on $\X$, then any locally Lipschitz controller $k$ that satisfies: 
\begin{equation}
    \dot{h} \big( x, k(x) \big) \geq - \alpha \big( h(x) \big)
\label{eq:safety_condition}
\end{equation}
for all ${x \in \C}$ renders~\eqref{eq:closedloop} safe w.r.t.~$\C$.
Furthermore, if~\eqref{eq:safety_condition} holds for all ${x \in \X}$, then $\C$ is asymptotically stable.
}
\end{theorem}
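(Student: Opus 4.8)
The plan is to establish the two assertions separately, both by applying the comparison lemma to the scalar signal $\lambda(t) := h(x(t))$ evaluated along a closed-loop solution $x(t)$ of~\eqref{eq:closedloop}; recall that local Lipschitzness of $k$ guarantees this solution is unique and, by assumption, defined for all $t \geq 0$.

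For the safety claim, I would first differentiate along the trajectory to get $\dot\lambda(t) = \dot h(x(t), k(x(t)))$, so that the controller condition~\eqref{eq:safety_condition} furnishes the differential inequality $\dot\lambda(t) \geq -\alpha(\lambda(t))$ at every $t$ for which $x(t) \in \C$, i.e.\ for which $\lambda(t) \geq 0$. I would then compare $\lambda$ against the solution $y$ of the scalar problem $\dot y = -\alpha(y)$, $y(0) = \lambda(0) = h(x_0) \geq 0$. Since $\alpha \in \Keinf$ satisfies $\alpha(0) = 0$, the value $y = 0$ is an equilibrium of this flow, so $y(0) \geq 0$ forces $y(t) \geq 0$ for all $t \geq 0$; the comparison lemma then gives $\lambda(t) \geq y(t) \geq 0$, which is precisely $x(t) \in \C$ for all $t \geq 0$. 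To make this airtight despite the fact that the inequality is only known on $\C$, I would introduce the first exit time $\tau := \inf\{t : \lambda(t) < 0\}$: on $[0,\tau]$ the trajectory is still in $\C$, so $\dot\lambda \geq -\alpha(\lambda)$ holds there, the comparison bound applies up to $\tau$, and it precludes $\lambda$ from ever dipping below zero, forcing $\tau = \infty$.

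For asymptotic stability the hypothesis strengthens so that~\eqref{eq:safety_condition} holds for all $x \in \X$, and consequently $\dot\lambda(t) \geq -\alpha(\lambda(t))$ now holds for every $t \geq 0$ irrespective of whether $x(t) \in \C$. I would again invoke the comparison lemma against $\dot y = -\alpha(y)$, $y(0) = \lambda(0)$, but this time exploit that $y = 0$ is a globally asymptotically stable equilibrium of the scalar flow: because $\alpha$ is of extended class-$\Kinf$, $-\alpha(y)$ points toward the origin for every $y \neq 0$, so $y(t) \to 0$. The forward invariance from the first part disposes of initial conditions inside $\C$, while for $x_0 \notin \C$ the bound $\lambda(t) \geq y(t)$ shows the constraint violation $\max\{-h(x(t)), 0\}$ decays to zero, giving attractivity; together with stability in the $-h$ direction (for $y < 0$ one has $\dot y > 0$, so $y$, and hence a lower envelope of $\lambda$, increases monotonically toward $0$) this yields asymptotic stability of $\C$.

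The main obstacle I anticipate is the limited regularity of $\alpha$: as an element of $\Keinf$ it is only continuous and strictly increasing, not necessarily locally Lipschitz, so $\dot y = -\alpha(y)$ need not have unique solutions and the textbook comparison lemma does not apply verbatim. I would circumvent this either by working with the maximal (or minimal) solution of the comparison system and using monotonicity of $-\alpha$ directly, or by replacing the comparison step with a Nagumo-type boundary argument at $\tau$, handling with care the degenerate case $\dot\lambda = -\alpha(0) = 0$ on $\partial\C$, where the closed-loop vector field is merely tangent to the boundary. A secondary technical point is converting convergence of the scalar $h(x(t))$ into genuine set convergence to $\C$ for the stability claim, which leans on the standing assumption after~\eqref{eq:safeset} that $\C$ is non-empty with no isolated points.
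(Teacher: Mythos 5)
The paper does not prove this theorem itself: it is imported verbatim from the cited references \cite{AmesXuGriTab2017, xu2015robustnessofCBFs}, and no proof appears in the Appendix. Your comparison-lemma argument on $\lambda(t)=h(x(t))$ --- forward invariance via the exit-time argument and $\lambda(t)\ge y(t)\ge 0$ for $\dot y=-\alpha(y)$, and attractivity of $\C$ from the scalar flow converging to the origin --- is exactly the route taken in those references, and your flagged caveats (non-Lipschitz $\alpha$ requiring the maximal solution of the comparison system, and passing from $h(x(t))\to 0$ to genuine set convergence) are the right ones. The proposal is correct and consistent with the standard proof.
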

Accordingly, if the controller $k$ is synthesized such that~\eqref{eq:safety_condition} holds for all ${x \in \C}$, then the closed-loop system evolves in the safe set: ${x_0 \in \C \implies x(t) \in \C}$ ${\forall t \geq 0}$.
Moreover, even if the initial condition is outside $\C$, i.e., ${x_0 \notin \C}$, the system converges towards $\C$  if~\eqref{eq:safety_condition} is enforced for all ${x \in \X}$~\cite{xu2015robustnessofCBFs}.

Condition~\eqref{eq:safety_condition} is often used as constraint in optimization to synthesize safe controllers.
For example, a desired but not necessarily safe controller ${\kd: \X \to \U}$ can be modified to a safe controller via the {\em quadratic program (QP)}:
\begin{align}
\begin{split}
    k(x) = \underset{u \in \U}{\operatorname{argmin}} & \quad \| u - \kd(x) \|^2 \\[-3pt]
    \text{s.t.} & \quad \dot{h}(x,u) \geq - \alpha \big( h(x) \big),
\end{split}
\label{eq:QP}
\end{align}
also known as {\em safety filter}, which has explicit solution~\cite{Alan2023AV}:
\begin{align}
    k(x) & = \begin{cases}
        \kd(x) + \max\{0, \eta(x) \} \frac{\L{g}{h(x)}^\top}{\|\L{g}{h(x)}\|^2}, & {\rm if}\ \L{g}{h(x)} \neq 0, \\
        \kd(x), & {\rm if}\ \L{g}{h(x)} = 0,
    \end{cases} \nonumber \\
    \eta(x) & = -\L{f}{h(x)} - \L{g}{h(x)} \kd(x) - \alpha \big( h(x) \big).
\label{eq:QPsolu}
\end{align}

\subsection{Motivation: Multiple CBFs}

Controller~\eqref{eq:QPsolu} guarantees safety w.r.t.~a single safe set $\C$.
However, there exist more complex safety specifications in practice that involve compositions of multiple sets.
Such general specifications are discussed in the next section.
As motivation, we first consider the case of enforcing multiple safety constraints simultaneously, given by the sets:
\begin{equation}
    \C_i = \{ x \in \X: h_i(x) \geq 0 \},
\label{eq:safeset_individual}
\end{equation}
and CBF candidates $h_i$, with ${i \in I = \{1, 2, \ldots, N\}}$.
Our goal is to maintain ${x(t) \in \C_i}$ ${\forall t \geq 0}$ and ${\forall i \in I}$, that corresponds to rendering the {\em intersection} of sets $\C_i$ safe.

One may achieve this goal by enforcing multiple constraints on the input simultaneously,
for example, by the QP:
\begin{align}
\begin{split}
    k(x) = \underset{u \in \U}{\operatorname{argmin}} & \quad \| u - \kd(x) \|^2 \\[-3pt]
    \text{s.t.} & \quad \dot{h}_i(x,u) \geq - \alpha_i \big( h_i(x) \big) \quad \forall i \in I.
\end{split}
\label{eq:QPmultiple}
\end{align}
However,~\eqref{eq:QPmultiple} may not be feasible (its solution may not exist) for arbitrary number of constraints.
Even if each $h_i$ is CBF
and consequently each individual constraint in~\eqref{eq:QPmultiple} could be satisfied by a control input, the same input may not satisfy all constraints.
For the feasibility of~\eqref{eq:QPmultiple} we rather require:
\begin{equation}
    \sup_{u \in \U} \min_{i \in I} \Big( \dot{h}_i(x,u) + \alpha_i \big( h(x) \big) \Big) \geq 0,
\label{eq:CBF_condition_multiple_sup}
\end{equation}
cf.~\eqref{eq:CBF_condition}, that can also be stated in a form like~\eqref{eq:CBF_condition_rewritten} as follows.
\begin{theorem} \label{theo:feasibility}
\textit{
The QP~\eqref{eq:QPmultiple} is feasible if and only if:
\begin{equation}
    \!\sum_{i \in I}\! \lambda_i \L{g}{h_i(x)} \!=\! 0 \!\implies\! \!\!\sum_{i \in I}\! \lambda_i \Big(\! \L{f}{h_i(x)} \!+\! \alpha_i \big( h_i(x) \big) \!\Big) \!\geq\! 0\!
\label{eq:CBF_condition_multiple}
\end{equation}
holds for all ${x \in \X}$ and ${\lambda_i \geq 0}$.
}
\end{theorem}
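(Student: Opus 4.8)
The plan is to recognize that, for each fixed $x \in \X$, the feasibility of the QP~\eqref{eq:QPmultiple} is purely a question of solvability of a finite system of linear inequalities in the decision variable $u$, and to settle it via a theorem of the alternative (Farkas' lemma). First I would unpack the constraints: substituting $\dot{h}_i(x,u) = \L{f}{h_i(x)} + \L{g}{h_i(x)} u$, the constraint set of~\eqref{eq:QPmultiple} is
\begin{equation}
    \big\{ u \in \U : \L{g}{h_i(x)} u \geq -\L{f}{h_i(x)} - \alpha_i\big(h_i(x)\big), \ \forall i \in I \big\}.
\label{eq:feasset}
\end{equation}
Since the objective $\|u - \kd(x)\|^2$ is strictly convex and coercive while~\eqref{eq:feasset} is a closed polyhedron, the $\operatorname{argmin}$ exists (and is unique) if and only if~\eqref{eq:feasset} is nonempty. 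Hence the QP is feasible exactly when the linear system in~\eqref{eq:feasset} admits a solution $u$, and it suffices to characterize this solvability.

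Next I would apply Farkas' lemma in the following ``alternative'' form: writing $a_i := \L{g}{h_i(x)}$ and $b_i := -\L{f}{h_i(x)} - \alpha_i(h_i(x))$, the system $a_i u \geq b_i$ ($\forall i \in I$) has no solution $u$ if and only if there exist multipliers $\lambda_i \geq 0$ with $\sum_{i \in I} \lambda_i a_i = 0$ and $\sum_{i \in I} \lambda_i b_i > 0$. Intuitively, a nonnegative combination of the inequalities yields $\big(\sum_{i} \lambda_i a_i\big) u \geq \sum_{i} \lambda_i b_i$; if the left-hand coefficient vector vanishes while the right-hand side is strictly positive, this reads $0 > 0$ for every $u$, certifying infeasibility, and the lemma guarantees that such a certificate always exists whenever the system is infeasible.

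Taking the contrapositive, the system is solvable if and only if every $\lambda_i \geq 0$ satisfying $\sum_{i \in I} \lambda_i a_i = 0$ also satisfies $\sum_{i \in I} \lambda_i b_i \leq 0$. Substituting back $a_i$ and $b_i$, the equality $\sum_{i} \lambda_i \L{g}{h_i(x)} = 0$ becomes the premise of the implication, and $\sum_{i} \lambda_i b_i \leq 0$ rearranges into $\sum_{i} \lambda_i\big(\L{f}{h_i(x)} + \alpha_i(h_i(x))\big) \geq 0$, which is precisely the conclusion in~\eqref{eq:CBF_condition_multiple}. Since all of the above is carried out at a fixed but arbitrary $x$, requiring feasibility at every $x \in \X$ is equivalent to requiring~\eqref{eq:CBF_condition_multiple} for all $x \in \X$ and $\lambda_i \geq 0$, which completes the proof.

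The step I expect to be most delicate is invoking the correct variant of the theorem of the alternative: the decision variable $u$ is sign-unconstrained while the multipliers $\lambda_i$ must be nonnegative, so I would use the form of Farkas' lemma for inequality systems $a_i u \geq b_i$ rather than the standard-form $Ax=b,\ x\geq 0$ version, and carefully track signs through $b_i = -\L{f}{h_i(x)} - \alpha_i(h_i(x))$ so that the direction of the final inequality matches~\eqref{eq:CBF_condition_multiple}. Using Farkas directly also lets me sidestep a pitfall in arguing from the supremum form~\eqref{eq:CBF_condition_multiple_sup}: the supremum over $u$ need not be attained, so ``$\sup \geq 0$'' does not by itself produce a feasible $u$, whereas the alternative theorem characterizes nonemptiness of~\eqref{eq:feasset} exactly. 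The remaining reduction---from existence of the $\operatorname{argmin}$ to nonemptiness of the constraint set---is routine once strict convexity and coercivity of the cost are noted.
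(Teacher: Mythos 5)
Your proof is correct, and it arrives at the same dual characterization as the paper, but by a route that is packaged differently and is in one respect tighter. The paper forms the Lagrangian ${L(x,u,\lambda) = -\sum_{i \in I} \lambda_i \big( \dot{h}_i(x,u) + \alpha_i ( h_i(x) ) \big)}$ of the feasibility problem, computes the dual function ${g_L(x,\lambda) = \inf_{u \in \U} L(x,u,\lambda)}$ --- which equals ${-\sum_{i \in I} \lambda_i \big( \L{f}{h_i(x)} + \alpha_i(h_i(x)) \big)}$ when ${\sum_{i \in I} \lambda_i \L{g}{h_i(x)} = 0}$ and $-\infty$ otherwise --- and reads off~\eqref{eq:CBF_condition_multiple} as ${g_L(x,\lambda) \leq 0}$ for all ${\lambda_i \geq 0}$. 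You instead apply the theorem of the alternative (Farkas/Gale) directly to the linear system ${\L{g}{h_i(x)} u \geq -\L{f}{h_i(x)} - \alpha_i(h_i(x))}$, and your sign bookkeeping checks out: solvability is equivalent to ${\sum_i \lambda_i b_i \leq 0}$ for every nonnegative $\lambda$ annihilating the coefficient vectors, which rearranges exactly to~\eqref{eq:CBF_condition_multiple}. The two arguments are mathematically equivalent (the Farkas certificate is the dual multiplier), but yours makes explicit the step the paper elides: the paper asserts that feasibility is \emph{equivalent} to ${g_L(x,\lambda) \leq 0}$ for all ${\lambda_i \geq 0}$, yet only the forward implication is immediate from weak duality; the converse --- that absence of a dual certificate implies existence of a feasible $u$ --- is precisely the nontrivial direction of Farkas' lemma, which you invoke by name. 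Your preliminary reduction (the $\operatorname{argmin}$ in~\eqref{eq:QPmultiple} exists iff the closed polyhedral constraint set is nonempty, by strict convexity and coercivity of the cost) and your remark that the supremum form~\eqref{eq:CBF_condition_multiple_sup} does not by itself yield an attaining $u$ are both points the paper passes over; they strengthen rather than change the argument.
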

\noindent The proof is given in the Appendix.

This highlights that multiple CBFs are more challenging to use than a single one.
With this as motivation, next we propose to encode all safety specifications into a single CBF.


\section{COMPLEX SAFETY SPECIFICATIONS}
\label{sec:multipleCBFs}

We propose a framework to construct a single CBF candidate that captures complex safety specifications, wherein safety is given by Boolean logical operations between multiple constraints.
For example, the motivation above involves logical AND operation: ${x(t) \in \C_1}$ AND ... AND ${x(t) \in \C_N}$ must hold.
Next, we discuss arbitrary logical compositions (with AND, OR and negation) of safety constraints.

\subsection{Operations Between Sets}

Consider multiple safety constraints, each given by a set $\C_i$
in~\eqref{eq:safeset_individual}.
These may be combined via the following Boolean logical operations to capture complex safety specifications.

\subsubsection{Identity / class-$\Ke$ function}
The 0-superlevel set $\C_i$ of $h_i$ is the same as that of $\gamma_i \circ h_i$ for any ${\gamma_i \in \Ke}$:
\begin{equation}
    \C_i = \{ x \in \X: \gamma_i \big( h_i(x) \big) \geq 0 \}.
\end{equation}

\subsubsection{Complement set / negation}
The complement\footnote{More precisely, $\overline{\C_i}$ is the closure of the complement of $\C_i$, i.e., it includes the boundary ${\partial \C_i}$ (where ${h_i(x) = 0}$).} $\overline{\C_i}$ of the 0-superlevel set of $h_i$ is the 0-superlevel set of $-h_i$:
\begin{equation}
    \overline{\C_i} = \{ x \in \X: -h_i(x) \geq 0 \}.
\end{equation}

\subsubsection{Union of sets / maximum / OR operation}

The union of multiple 0-superlevel sets:
\begin{equation}
    {\textstyle \bigcup_{i \in I}} \C_i = \{ x \in \X: \exists i \in I\ \text{s.t. } h_i(x) \geq 0 \}
\end{equation}
can be given by a single inequality with the $\max$ function~\cite{glotfelter2017nonsmooth}:
\begin{equation}
    {\textstyle \bigcup_{i \in I}} \C_i = \Big\{ x \in \X: \max_{i \in I} h_i(x) \geq 0 \Big\}.
\label{eq:maxcbf}
\end{equation}
The union describes logical OR relation between constraints: 
\begin{equation}
    x \!\in\! {\textstyle \bigcup_{i \in I}} \C_i \!\iff\! x \!\in\! \C_1\ \text{OR } x \!\in\! \C_2\ \ldots\ \text{OR } x \!\in\! \C_N.
\end{equation}

\subsubsection{Intersection of sets / minimum / AND operation}
The intersection of multiple 0-superlevel sets:
\begin{equation}
    {\textstyle \bigcap_{i \in I}} \C_i = \{ x \in \X: h_i(x) \geq 0 \;\; \forall i \in I \}
\end{equation}
can be compactly expressed using the $\min$ function~\cite{glotfelter2017nonsmooth}:
\begin{equation}
    {\textstyle \bigcap_{i \in I}} \C_i = \Big\{ x \in \X: \min_{i \in I} h_i(x) \geq 0 \Big\}.
\label{eq:mincbf}
\end{equation}
As in the motivation above, the intersection of sets captures logical AND relation between multiple safety constraints: 
\begin{equation}
    \!x \!\in\! {\textstyle \bigcap_{i \in I}} \C_i \!\iff\! x \!\in\! \C_1\ \text{AND } x \!\in\! \C_2\ \ldots\ \text{AND } x \!\in\! \C_N.\!
\end{equation}

Further operations between sets can be decomposed into applications of identity, complement, union and intersection, which are represented equivalently by class-$\Ke$ functions, negation, $\max$ and $\min$ operations, respectively.

\begin{remark} \label{rem:scaling}
Note that $h_i$ may have various physical meanings and orders of magnitude for different $i$.
Thus, for numerical conditioning (especially when we use exponentials later on), one may scale $h_i$ to ${\gamma_i \circ h_i}$ with continuously differentiable ${\gamma_i \in \Ke}$.
For example, ${\gamma_i(r) = \tanh(r)}$ scales to the interval ${\gamma_i(h_i(x)) \in [-1,1]}$ that may help numerics.
Next, we assume that the definitions of $h_i$ already include any necessary scaling and we omit $\gamma_i$.
Likewise, we do not discuss negation further by assuming that $h_i$ are defined with proper sign.
\end{remark}

\subsection{Smooth Approximations to Construct a Single CBF}

While the union and intersection of sets are described
by a single function in~\eqref{eq:maxcbf} and~\eqref{eq:mincbf}, the resulting expressions, ${\max_{i \in I} h_i(x)}$ and ${\min_{i \in I} h_i(x)}$, may not be continuously differentiable in $x$ \cite{glotfelter2017nonsmooth}, and they are not CBFs.
As main result, we propose a CBF candidate by smooth approximations of $\max$ and $\min$, and describe its properties.
This enables us to enforce complex safety specifications as a single constraint.

\subsubsection{Union of Sets}


To capture the union of sets in~\eqref{eq:maxcbf}, we propose a CBF candidate via a smooth over-approximation of the $\max$ function using a log-sum-exp expression~\cite{lindemann2019stl}:
\begin{equation}
    h(x) = \frac{1}{\kappa} \ln \bigg( \sum_{i \in I} {\rm e}^{\kappa h_i(x)} \bigg)
\label{eq:hmax}
\end{equation}
with smoothing parameter ${\kappa>0}$.
The Lie derivatives are:
\begin{equation}
    \!\L{f}{h(x)} \!\!=\!\! \!\sum_{i \in I}\! \lambda_i(x) \!\L{f}{h_i(x)}, \;
    \L{g}{h(x)} \!\!=\!\! \!\sum_{i \in I}\! \lambda_i(x) \!\L{g}{h_i(x)},\!
\label{eq:Lfgh_multiple}
\end{equation}
with the coefficients:
\begin{equation}
    \lambda_i(x) = {\rm e}^{\kappa (h_i(x) - h(x))},
\label{eq:maxcoeff}
\end{equation}
that satisfy ${\sum_{i \in I}\lambda_i(x) = 1}$.
The proposed CBF candidate in~\eqref{eq:hmax} has the properties below; see proof in the Appendix.

\begin{theorem} \label{theo:maxCBF}
\textit{
Consider sets $\C_i$ in~\eqref{eq:safeset_individual} given by functions $h_i$, and the union ${\bigcup_{i \in I} \C_i}$ in~\eqref{eq:maxcbf}.
Function $h$ in~\eqref{eq:hmax} over-approximates the $\max$ expression in~\eqref{eq:maxcbf} with bounds:
\begin{equation}
    \max_{i \in I} h_i(x) \leq h(x) \leq \max_{i \in I} h_i(x) + \frac{\ln N}{\kappa} \quad \forall x \in \mathbb{R}^n,
\label{eq:hmax_bounds}
\end{equation}
such that ${\lim_{\kappa \to \infty} h(x) = \max_{i \in I} h_i(x)}$.
The corresponding set $\C$ in~\eqref{eq:safeset} encapsulates the union, ${\C \supseteq \bigcup_{i \in I} \C_i}$,
such that ${\lim_{\kappa \to \infty} \C = \bigcup_{i \in I} \C_i}$.
Moreover, if~\eqref{eq:CBF_condition_multiple} holds for all ${x \in \X}$
with
$\lambda_i$ in~\eqref{eq:maxcoeff}, then $h$ is a CBF for~\eqref{eq:system} on $\X$ with any ${\alpha \in \Keinf}$ that satisfies
${\alpha(r) \geq \alpha_i(r)}$ 
${\forall r \in \R}$ and ${\forall i \in I}$.
}
\end{theorem}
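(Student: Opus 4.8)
The plan is to dispatch the three claims in order, treating the over-approximation bounds and the set inclusion as direct consequences of the log-sum-exp structure and reserving the real work for the CBF property. Before anything else I would record that $h$ in \eqref{eq:hmax} is continuously differentiable on all of $\X$: the argument of the logarithm, $\sum_{i \in I} {\rm e}^{\kappa h_i(x)}$, is a finite sum of smooth, strictly positive terms, so $\ln(\cdot)$ acts on a strictly positive quantity and the composition inherits the regularity of the $h_i$. This is exactly what makes $h$ a legitimate CBF candidate, in contrast to the nonsmooth ${\max_{i \in I} h_i}$.

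For the bounds \eqref{eq:hmax_bounds}, let ${M(x) = \max_{i \in I} h_i(x)}$. The lower bound follows by keeping only the maximal term, ${\sum_{i} {\rm e}^{\kappa h_i(x)} \geq {\rm e}^{\kappa M(x)}}$, and the upper bound by overcounting, ${\sum_{i} {\rm e}^{\kappa h_i(x)} \leq N {\rm e}^{\kappa M(x)}}$; applying ${\tfrac{1}{\kappa}\ln(\cdot)}$ and using monotonicity of $\ln$ yields ${M(x) \leq h(x) \leq M(x) + \tfrac{\ln N}{\kappa}}$, and since ${\tfrac{\ln N}{\kappa} \to 0}$ the squeeze theorem gives ${\lim_{\kappa \to \infty} h(x) = M(x)}$. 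The inclusion ${\C \supseteq \bigcup_{i} \C_i}$ is then immediate, since ${x \in \bigcup_i \C_i}$ forces ${M(x) \geq 0}$, hence ${h(x) \geq M(x) \geq 0}$. For the set limit I would argue pointwise: any $x$ with ${M(x) > 0}$ lies in $\C$ for every $\kappa$, while any $x$ with ${M(x) < 0}$ satisfies ${h(x) \leq M(x) + \tfrac{\ln N}{\kappa} < 0}$ once ${\kappa > \ln N / |M(x)|}$, so it eventually leaves $\C$; thus the 0-superlevel set of $h$ converges to ${\bigcup_i \C_i = \{M \geq 0\}}$, with convergence at every $x$ off the boundary $\{M = 0\}$.

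The core of the proof is the CBF property, where I would verify \eqref{eq:CBF_condition_rewritten} for $h$ through the Lie-derivative decomposition \eqref{eq:Lfgh_multiple}. Suppose ${\L{g}{h(x)} = 0}$; by \eqref{eq:Lfgh_multiple} this is precisely ${\sum_{i} \lambda_i(x) \L{g}{h_i(x)} = 0}$ with the nonnegative coefficients $\lambda_i(x)$ of \eqref{eq:maxcoeff}. Applying the hypothesis \eqref{eq:CBF_condition_multiple} with these particular coefficients gives ${\sum_{i} \lambda_i(x)\big(\L{f}{h_i(x)} + \alpha_i(h_i(x))\big) \geq 0}$, which by \eqref{eq:Lfgh_multiple} rearranges to ${\L{f}{h(x)} \geq -\sum_{i} \lambda_i(x)\,\alpha_i(h_i(x))}$. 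It then remains only to dominate the weighted sum by $\alpha(h(x))$.

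The key step, and the one I expect to carry the argument, is the chain
\begin{align}
\sum_{i \in I} \lambda_i(x)\,\alpha_i\big(h_i(x)\big)
&\leq \sum_{i \in I} \lambda_i(x)\,\alpha\big(h_i(x)\big) \nonumber \\
&\leq \sum_{i \in I} \lambda_i(x)\,\alpha\big(h(x)\big) = \alpha\big(h(x)\big),
\label{eq:keychain}
\end{align}
whose three relations use, respectively, the hypothesis ${\alpha_i \leq \alpha}$ with ${\lambda_i \geq 0}$; the monotonicity of ${\alpha \in \Keinf}$ together with ${h_i(x) \leq M(x) \leq h(x)}$ from \eqref{eq:hmax_bounds}, again with ${\lambda_i \geq 0}$; and the normalization ${\sum_i \lambda_i(x) = 1}$ noted after \eqref{eq:maxcoeff}. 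Combining \eqref{eq:keychain} with the lower bound on $\L{f}{h(x)}$ gives ${\L{f}{h(x)} + \alpha(h(x)) \geq 0}$, which is exactly \eqref{eq:CBF_condition_rewritten}, so $h$ is a CBF on $\X$. The only subtlety worth flagging is that the $\lambda_i(x)$ in \eqref{eq:maxcoeff} must be the very same weights that appear both in the Lie-derivative decomposition and in the hypothesis \eqref{eq:CBF_condition_multiple}: it is precisely their being a convex combination (nonnegative and summing to one) that lets the two monotonicity inequalities in \eqref{eq:keychain} close the gap between the per-constraint rates $\alpha_i$ and the single rate $\alpha$.
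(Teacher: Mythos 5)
Your proposal is correct and follows essentially the same route as the paper's proof: the same exponential sandwich $ {\rm e}^{\kappa \max_i h_i} \leq \sum_i {\rm e}^{\kappa h_i} \leq N {\rm e}^{\kappa \max_i h_i}$ for the bounds, the same superlevel-set inclusion, and the same CBF argument combining the Lie-derivative decomposition~\eqref{eq:Lfgh_multiple}, the convexity of the weights $\lambda_i$, and the chain $\alpha_i(h_i(x)) \leq \alpha(h_i(x)) \leq \alpha(h(x))$. The only differences are cosmetic (you bound $\sum_i \lambda_i \alpha_i(h_i(x))$ by $\alpha(h(x))$ directly rather than stating the intermediate inequality~\eqref{eq:Lfh_relationship_max}, and you spell out the pointwise set-limit argument the paper leaves implicit).
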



\begin{remark} \label{rem:buffer}
A set $\C$ that {\em lies inside} the union of the individual sets can also be built by using a buffer $b$ when defining $h$:
\begin{equation}
    h(x) = \frac{1}{\kappa} \ln \bigg( \sum_{i \in I} {\rm e}^{\kappa h_i(x)} \bigg) - \frac{b}{\kappa}.
\label{eq:hmax_buffer}
\end{equation}
For example, based on the upper bound in~\eqref{eq:hmax_bounds}, ${b = \ln N}$ leads to ${h(x) \leq \max_{i \in I} h_i(x)}$ and ${\C \subseteq \bigcup_{i \in I} \C_i}$.
Alternatively, buffers from problem-specific bounds that are tighter than~\eqref{eq:hmax_bounds} can give better inner-approximation $\C$ of ${\bigcup_{i \in I} \C_i}$.
\end{remark}


\begin{figure}
\centering
\includegraphics[scale=1]{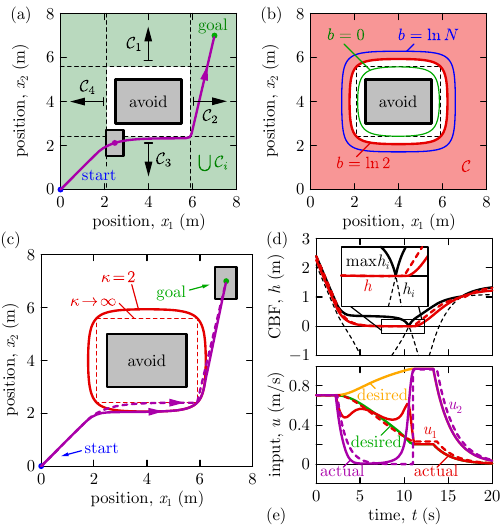}
\vspace{-3mm}
\caption{
Numerical results for Example~\ref{ex:obstacle_single}, where a reach-avoid task is safely executed.
(a) Safe set,
(b) 0-superlevel set of the proposed CBF~\eqref{eq:hmax_buffer},
(c)-(e) simulation of safety-critical control by~\eqref{eq:QPsolu}.
}
\vspace{-5mm}
\label{fig:obstacle_single}
\end{figure}

\begin{example} \label{ex:obstacle_single}
Consider Fig.~\ref{fig:obstacle_single}, where a rectangular agent with planar position ${x \in \R^2}$, velocity ${u \in \R^2}$, and dynamics:
\begin{equation}
    \dot{x} = u
\label{eq:single_integrator}
\end{equation}
is controlled to reach a desired position ${\xd \in \R^2}$ while avoiding a rectangular obstacle\footnote{Matlab codes for each example are available at: \url{https://github.com/molnartamasg/CBFs-for-complex-safety-specs}.}.
To reach the goal, we use a proportional controller with gain ${\Kp>0}$ and saturation: 
\begin{equation}
    \kd(x) = \sat \big( \Kp (\xd-x) \big),
\label{eq:proportional_controller}
\end{equation}
where ${\sat(u) = \min \{ 1, \umax / \|u\|_2 \} u}$ with some ${\umax>0}$.
We modify this desired controller to a safe controller using the safety filter~\eqref{eq:QPsolu} and the proposed CBF construction.

To avoid the obstacle, the agent's center must be outside a rectangle that has the combined size of the obstacle and the agent; see Fig.~\ref{fig:obstacle_single}(a).
This means ${N=4}$ constraints linked with OR logic: keep the center left to OR above OR right to OR below the rectangle.
Accordingly, the safe set is given by the union ${\bigcup_{i \in I} \C_i}$ of four individual sets $\C_i$ described by four barriers at location ${x_i \in \R^2}$ with normal vector ${n_i \in \R^2}$:
\begin{equation}
    h_i(x) = n_i^\top (x-x_i),
\label{eq:barrier_halfspace}
\end{equation}
${i \in I = \{1, 2, 3, 4\}}$.
We combine the four barriers with~\eqref{eq:hmax_buffer}.
The resulting safe set $\C$ is plotted in Fig.~\ref{fig:obstacle_single}(b) for ${\kappa = 2}$ and various buffers $b$.
Set $\C$ encapsulates ${\bigcup_{i \in I} \C_i}$ for ${b = 0}$, whereas
set $\C$ lies inside ${\bigcup_{i \in I} \C_i}$ for ${b = \ln N}$; cf.~Remark~\ref{rem:buffer}.
For the problem-specific buffer ${b = \ln 2}$ (where $N$ is replaced by $2$ since two barriers meet at each corner), the approximation $\C$ gets very close to the corners of ${\bigcup_{i \in I} \C_i}$.

We executed controller~\eqref{eq:QPsolu} with ${\Kp=0.5}$, ${\umax = 1}$, ${\kappa=2}$, ${b = \ln 2}$ and ${\alpha(h) = h}$; see solid lines in Fig.~\ref{fig:obstacle_single}(c).
The reach-avoid task is successfully accomplished by keeping the agent within set $\C$.
Fig.~\ref{fig:obstacle_single}(d) highlights that safety is maintained w.r.t.~a smooth under-approximation $h$ (red) of the maximum ${\max_{i \in I} h_i}$ (black) of the individual barriers $h_i$ (dashed).
Fig.~\ref{fig:obstacle_single}(e) indicates the underlying control input.
We also demonstrate by dashed lines in Fig.~\ref{fig:obstacle_single}(c)-(e) the case of increasing the smoothing parameter to ${\kappa \to \infty}$.
The sharp corner is recovered and the input becomes discontinuous ($u_2$ jumps).
While discontinuous inputs can be addressed by nontrivial nonsmooth CBF theory~\cite{glotfelter2017nonsmooth}, they may be difficult to realize accurately by actuators in engineering systems.
\end{example}

\subsubsection{Intersection of Sets}

To capture the intersection of sets in~\eqref{eq:mincbf}, we propose to use a smooth under-approximation of the $\min$ function as CBF candidate~\cite{lindemann2019stl}, analogously to~\eqref{eq:hmax}:
\begin{equation}
    h(x) = - \frac{1}{\kappa} \ln \bigg( \sum_{i \in I} {\rm e}^{- \kappa h_i(x)} \bigg).
\label{eq:hmin}
\end{equation}
The Lie derivatives of $h$ are expressed by~\eqref{eq:Lfgh_multiple} with:
\begin{equation}
    \lambda_i(x) = {\rm e}^{- \kappa (h_i(x) - h(x))},
\label{eq:mincoeff}
\end{equation}
that satisfy ${\sum_{i \in I}\lambda_i(x) = 1}$.
The proposed CBF candidate in~\eqref{eq:hmin} has the properties below, as proven in the Appendix.

\begin{theorem} \label{theo:minCBF}
\textit{
Consider sets $\C_i$ in~\eqref{eq:safeset_individual} given by functions $h_i$, and the intersection ${\bigcap_{i \in I} \C_i}$ in~\eqref{eq:mincbf}.
Function $h$ in~\eqref{eq:hmin} under-approximates the $\min$ expression in~\eqref{eq:mincbf} with bounds:
\begin{equation}
    \min_{i \in I} h_i(x) - \frac{\ln N}{\kappa} \leq h(x) \leq \min_{i \in I} h_i(x) \quad \forall x \in \mathbb{R}^n,
\label{eq:hmin_bounds}
\end{equation}
such that ${\lim_{\kappa \to \infty} h(x) = \min_{i \in I} h_i(x)}$.
The corresponding set $\C$ in~\eqref{eq:safeset} lies inside the intersection, ${\C \subseteq \bigcap_{i \in I} \C_i}$,
such that ${\lim_{\kappa \to \infty} \C = \bigcap_{i \in I} \C_i}$.
}
\end{theorem}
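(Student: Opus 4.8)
The plan is to mirror the union case of Theorem~\ref{theo:maxCBF}, exploiting the duality between $\min$ and $\max$ under sign reversal. Setting $\tilde{h}_i = -h_i$ turns the under-approximation~\eqref{eq:hmin} into the negative of the over-approximation~\eqref{eq:hmax} built from the $\tilde{h}_i$: since $\frac{1}{\kappa}\ln(\sum_{i\in I}\mathrm{e}^{\kappa \tilde{h}_i(x)}) = -h(x)$ and $\max_{i\in I}\tilde{h}_i(x) = -\min_{i\in I}h_i(x)$, every estimate proved for the union transfers directly. I would either invoke this correspondence to pull back the bounds~\eqref{eq:hmax_bounds}, or establish~\eqref{eq:hmin_bounds} from scratch; the latter is short enough to state directly.

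For the bounds, write $h_{\min}(x) = \min_{i\in I}h_i(x)$. The upper bound follows because the sum $\sum_{i\in I}\mathrm{e}^{-\kappa h_i(x)}$ contains the term $\mathrm{e}^{-\kappa h_{\min}(x)}$ and all remaining terms are positive, so $\sum_{i\in I}\mathrm{e}^{-\kappa h_i(x)} \geq \mathrm{e}^{-\kappa h_{\min}(x)}$; applying $-\frac{1}{\kappa}\ln(\cdot)$, which is decreasing, yields $h(x)\leq h_{\min}(x)$. The lower bound follows because each exponent satisfies $h_i(x)\geq h_{\min}(x)$, hence $\mathrm{e}^{-\kappa h_i(x)}\leq \mathrm{e}^{-\kappa h_{\min}(x)}$ and $\sum_{i\in I}\mathrm{e}^{-\kappa h_i(x)}\leq N\,\mathrm{e}^{-\kappa h_{\min}(x)}$; the same monotone operation then gives $h(x)\geq h_{\min}(x)-\frac{\ln N}{\kappa}$. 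Letting $\kappa\to\infty$ collapses $\frac{\ln N}{\kappa}\to 0$, so the squeeze in~\eqref{eq:hmin_bounds} forces $h(x)\to h_{\min}(x)$ pointwise.

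The set inclusion is then immediate from the upper bound: if $x\in\C$ then $h(x)\geq 0$, and $h(x)\leq h_{\min}(x)$ gives $h_i(x)\geq h_{\min}(x)\geq 0$ for every $i\in I$, i.e.\ $x\in\bigcap_{i\in I}\C_i$. For the set limit I would argue via the pointwise convergence of $h$: any $x$ with $\min_{i\in I}h_i(x)>0$ satisfies, for $\kappa$ large enough, $h(x)\geq h_{\min}(x)-\frac{\ln N}{\kappa}>0$ and hence lies in $\C$, while $\C\subseteq\bigcap_{i\in I}\C_i$ holds for every $\kappa$; combining the two shows $\C$ sweeps out $\bigcap_{i\in I}\C_i$ as $\kappa\to\infty$.

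I expect the only delicate point to be the precise sense of the set limit $\lim_{\kappa\to\infty}\C=\bigcap_{i\in I}\C_i$: the interior points are recovered uniformly via the lower bound, whereas boundary points (where some $h_i(x)=0$) require a separate, elementary continuity argument. Everything else is a routine consequence of the monotonicity of $\ln$ and the squeeze in~\eqref{eq:hmin_bounds}. Notably, unlike the union case in Theorem~\ref{theo:maxCBF}, no CBF condition is asserted here, so no feasibility argument analogous to~\eqref{eq:CBF_condition_multiple} is needed.
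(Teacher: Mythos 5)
Your proof is correct and takes essentially the same route as the paper's: the same two-sided bound ${\rm e}^{-\kappa \min_i h_i(x)} \leq \sum_{i}{\rm e}^{-\kappa h_i(x)} \leq N{\rm e}^{-\kappa \min_i h_i(x)}$, the same monotone application of $-\tfrac{1}{\kappa}\ln(\cdot)$ to obtain \eqref{eq:hmin_bounds}, the squeeze for the $\kappa\to\infty$ limit, and the upper bound for the inclusion $\C \subseteq \bigcap_{i\in I}\C_i$. Your added care about the precise sense of the set limit goes beyond the paper's (equally terse) treatment but is not a methodological divergence.
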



\subsection{Single CBF for Arbitrary Safe Set Compositions}

Having discussed the union and intersection of sets, we extend our framework to arbitrary combinations of unions and intersections.
These include e.g. two-level or three-level compositions, like ${\bigcup \bigcap_{i} \C_i}$ or ${\bigcap \bigcup \bigcap_{i} \C_i}$, etc.
We propose an algorithmic way to capture these by a single CBF candidate.

Specifically, consider $M$ levels of safety specifications that establish a single safe set by composing $N$ individual sets.
The individual sets are $\C_i$ in~\eqref{eq:safeset_individual}, ${i \in I = \{1, \ldots, N\}}$.
The specification levels are indexed by ${\ell \in L = \{1, \ldots, M\}}$.
At each level, the union or intersection of sets is taken, resulting in $N_\ell$ new sets, denoted by $\C_i^\ell$, ${i \in I_\ell = \{1, \ldots, N_\ell\}}$.
This is repeated until a  single safe set, called $\Cc$, is obtained:
\begin{align}
\begin{split}
    \C_i^0 & = \C_i,
    \quad i \in I, \\
    \C_i^\ell & = \begin{cases}
    \bigcup_{j \in J_i^\ell} \C_j^{\ell-1} & {\rm if}\ \ell \in \Lor, \\
    \bigcap_{j \in J_i^\ell} \C_j^{\ell-1} & {\rm if}\ \ell \in \Land, 
    \end{cases}
    \quad i \in I_\ell, \\
    \Cc & = \C_1^M,
\end{split}
\label{eq:combination_set}
\end{align}
where ${J_i^\ell \subseteq I_{\ell-1}}$ is the indices of sets that combine into $C_i^\ell$, while $\Lor$ and $\Land$ are the indices of levels with union and intersection (${L = \Lor \cup \Land}$).
Unions and intersections imply the maximum and minimum of the individual barriers $h_i$, respectively, resulting in the combined CBF candidate $\hc$~\cite{glotfelter2017nonsmooth}:
\begin{align}
\begin{split}
    h_i^0(x) & = h_i(x),
    \quad i \in I, \\
    h_i^\ell(x) & = \begin{cases}
    \max_{j \in J_i^\ell} h_j^{\ell-1}(x) & {\rm if}\ \ell \in \Lor, \\
    \min_{j \in J_i^\ell} h_j^{\ell-1}(x) & {\rm if}\ \ell \in \Land, 
    \end{cases}
    \quad i \in I_\ell, \\
    \hc(x) & = h_1^M(x).
\end{split}
\label{eq:combination_CBF_nonsmooth}
\end{align}
This describes the safe set (that is assumed to be non-empty):
\begin{equation}
    \Cc = \{ x \in \X: \hc(x) \geq 0 \}.
\label{eq:combination_set_superlevel}
\end{equation}

While the combined function $\hc$ is nonsmooth~\cite{glotfelter2017nonsmooth}, we propose a continuously differentiable function $h$, by extending the smooth approximations~\eqref{eq:hmax} and~\eqref{eq:hmin} of min and max:
\begin{align}
\begin{split}
    H_i^0(x) & = {\rm e}^{\kappa h_i(x)},
    \quad i \in I, \\
    H_i^\ell(x) & = \begin{cases}
    \sum_{j \in J_i^\ell} H_j^{\ell-1}(x) & {\rm if}\ \ell \in \Lor, \\
    \frac{1}{\sum_{j \in J_i^\ell} \frac{1}{H_j^{\ell-1}(x)}} & {\rm if}\ \ell \in \Land,
    \end{cases}
    \quad i \in I_\ell, \\
    h(x) & = \frac{1}{\kappa} \ln H_1^M(x) - \frac{b}{\kappa}.
\end{split}
\label{eq:combination_CBF}
\end{align}
Note that we included a buffer $b$, according to Remark~\ref{rem:buffer}, to be able to adjust whether the resulting set $\C$ encapsulates $\Cc$ or lies inside it.
The derivative of the CBF candidate $h$
is:
\begin{align}
    \dot{H}_i^0(x,u) & = \kappa H_i^0(x) \dot{h}_i(x,u),
    \quad i \in I, \nonumber \\
    \dot{H}_i^\ell(x,u) & = \begin{cases}
    \sum_{j \in J_i^\ell} \dot{H}_j^{\ell-1}(x,u) \!&\! {\rm if}\ \ell \in \Lor, \\
    H_i^\ell(x)^2 \sum_{j \in J_i^\ell} \frac{\dot{H}_j^{\ell-1}(x,u)}{H_j^{\ell-1}(x)^2} \!&\! {\rm if}\ \ell \in \Land,
    \end{cases}
    \; i \in I_\ell, \nonumber \\
    \dot{h}(x,u) & = \frac{\dot{H}_1^M(x,u)}{\kappa H_1^M(x)}.
\label{eq:combination_CBF_derivative}
\end{align}

The proposed function $h$ approximates $\hc$ with the following properties that are proven in the Appendix.
\begin{theorem} \label{theo:error}
\textit{
Consider sets $\C_i$ in~\eqref{eq:safeset_individual} given by functions $h_i$, and the composition $\Cc$ in~\eqref{eq:combination_set} given by $\hc$ in~\eqref{eq:combination_CBF_nonsmooth}-\eqref{eq:combination_set_superlevel}.
Function $h$ in~\eqref{eq:combination_CBF} approximates $\hc$ with the error bound:
\begin{equation}
    - \frac{b_{\cap} + b}{\kappa} \leq h(x) - \hc(x) \leq \frac{b_{\cup} - b}{\kappa}  \quad \forall x \in \mathbb{R}^n,
\label{eq:combination_error}
\end{equation}
where
${b_{\cap} \!=\! \sum_{\ell \in \Land}\! \ln b_\ell}$,
${b_{\cup} \!=\! \sum_{\ell \in \Lor}\! \ln b_\ell}$,
${b_\ell \!=\! \max_{i \in I_\ell}\! |J_i^\ell|}$,
and $|J_i^\ell|$ is the number of elements in $J_i^\ell$.
If ${b \geq b_{\cup}}$, the corresponding set $\C$ in~\eqref{eq:safeset} lies inside $\Cc$, i.e., ${\C \subseteq \Cc}$, whereas if ${b \leq - b_{\cap}}$, set $\C$ encapsulates $\Cc$, i.e., ${\C \supseteq \Cc}$.
Furthermore, we have ${\lim_{\kappa \to \infty} h(x) = \hc(x)}$ and ${\lim_{\kappa \to \infty} \C = \Cc}$.
}
\end{theorem}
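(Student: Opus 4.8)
The plan is to reduce this multi-level estimate to repeated use of the single-level bounds already established in Theorems~\ref{theo:maxCBF} and~\ref{theo:minCBF}. The key device is to track the intermediate quantities $\phi_i^\ell(x) = \tfrac{1}{\kappa}\ln H_i^\ell(x)$, so that the recursion~\eqref{eq:combination_CBF} reads $\phi_i^0 = h_i = h_i^0$, with $\phi_i^\ell = \tfrac{1}{\kappa}\ln\sum_{j\in J_i^\ell} e^{\kappa\phi_j^{\ell-1}}$ at union levels, $\phi_i^\ell = -\tfrac{1}{\kappa}\ln\sum_{j\in J_i^\ell} e^{-\kappa\phi_j^{\ell-1}}$ at intersection levels, and $h = \phi_1^M - b/\kappa$. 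Each union step is then exactly the log-sum-exp over-approximation of $\max_{j\in J_i^\ell}\phi_j^{\ell-1}$ and each intersection step the smooth under-approximation of $\min_{j\in J_i^\ell}\phi_j^{\ell-1}$, so the bounds~\eqref{eq:hmax_bounds} and~\eqref{eq:hmin_bounds} apply verbatim with $\phi_j^{\ell-1}$ in place of the $h_i$ and $|J_i^\ell|$ in place of $N$.

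I would then prove by induction on $\ell$ a uniform sandwich $h_i^\ell(x) - \underline{e}^\ell/\kappa \le \phi_i^\ell(x) \le h_i^\ell(x) + \overline{e}^\ell/\kappa$ for all $i\in I_\ell$, with the nonsmooth $h_i^\ell$ from~\eqref{eq:combination_CBF_nonsmooth} and nonnegative constants $\overline{e}^\ell = \sum_{\ell'\in\Lor,\,\ell'\le\ell}\ln b_{\ell'}$, $\underline{e}^\ell = \sum_{\ell'\in\Land,\,\ell'\le\ell}\ln b_{\ell'}$. The base case holds with $\overline{e}^0 = \underline{e}^0 = 0$. At a union level, the lower bound in~\eqref{eq:hmax_bounds} gives $\phi_i^\ell \ge \max_j\phi_j^{\ell-1}$, and monotonicity of $\max$ applied termwise to the inductive hypothesis yields $\max_j\phi_j^{\ell-1}\ge\max_j h_j^{\ell-1} - \underline{e}^{\ell-1}/\kappa = h_i^\ell - \underline{e}^{\ell-1}/\kappa$; the upper bound in~\eqref{eq:hmax_bounds}, monotonicity of $\max$, and $|J_i^\ell|\le b_\ell$ give $\phi_i^\ell \le h_i^\ell + (\overline{e}^{\ell-1}+\ln b_\ell)/\kappa$. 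Hence a union level leaves the lower deficit unchanged and raises the upper excess by $\ln b_\ell$; the intersection step is symmetric via~\eqref{eq:hmin_bounds} and monotonicity of $\min$, leaving the upper excess fixed while raising the lower deficit by $\ln b_\ell$. Accumulating over all levels gives $\overline{e}^M = b_\cup$ and $\underline{e}^M = b_\cap$; since $\phi_1^M = h + b/\kappa$ and $h_1^M = \hc$, subtracting the buffer yields~\eqref{eq:combination_error} exactly.

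The remaining claims follow immediately. If $b\ge b_\cup$ the upper bound in~\eqref{eq:combination_error} forces $h(x)\le\hc(x)$ pointwise, so $\C\subseteq\Cc$; if $b\le -b_\cap$ the lower bound is nonnegative, giving $h(x)\ge\hc(x)$ and $\Cc\subseteq\C$. Holding $b$ fixed and letting $\kappa\to\infty$ collapses both sides of~\eqref{eq:combination_error} to zero, so $h\to\hc$ pointwise and $\C\to\Cc$, analogously to Theorems~\ref{theo:maxCBF} and~\ref{theo:minCBF}.

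The step I expect to be the crux is the inductive propagation, not any isolated estimate. At each level two distinct error sources must be combined — the smoothing error freshly introduced at that level (the $\ln|J_i^\ell|/\kappa$ term) and the error already carried in $\phi_j^{\ell-1}$ — and the argument that they add cleanly rests on the monotonicity of $\max$ and $\min$ together with replacing the set-specific cardinality $|J_i^\ell|$ by the level-uniform $b_\ell = \max_{i\in I_\ell}|J_i^\ell|$, so that a single bound holds across all sets at a given level. Keeping the asymmetry straight — union levels feed only the upper excess, intersection levels only the lower deficit — is what produces the clean separation of $b_\cup$ and $b_\cap$ in~\eqref{eq:combination_error}.
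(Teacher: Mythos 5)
Your proof is correct and follows essentially the same route as the paper's: an induction over the specification levels that accumulates a multiplicative (in the paper, on the $H_i^\ell$) or equivalently additive (in your log-space variables $\phi_i^\ell$) error constant, with union levels feeding only the upper bound and intersection levels only the lower bound. The only difference is cosmetic — you take logarithms first and invoke the single-level bounds~\eqref{eq:hmax_bounds},~\eqref{eq:hmin_bounds} directly, whereas the paper works with the exponentials via~\eqref{eq:exp_bound} and takes the logarithm at the end.
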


The proposed approach in~\eqref{eq:combination_CBF} captures complex safety specifications algorithmically by a single CBF candidate $h$, via the recursive use of~\eqref{eq:hmax} and~\eqref{eq:hmin} such that exponentials and logarithms are computed only once.
Safety is then interpreted w.r.t.~set $\C$, which can be tuned to approximate the specified set $\Cc$ as desired.
Based on the error bound~\eqref{eq:combination_error}, increasing $\kappa$ makes the approximation tighter, while $b$ affects whether ${\C \subseteq \Cc}$ or ${\C \supseteq \Cc}$.
Note that $h$ is a valid CBF only if it satisfies~\eqref{eq:CBF_condition}.
This is not guaranteed by Theorem~\ref{theo:error}, and it would require additional conditions like~\eqref{eq:CBF_condition_multiple} in Theorem~\ref{theo:maxCBF}.
If $h$ is a CBF, formal safety guarantees can be maintained, for example, by QP~\eqref{eq:QP} that has a single constraint and the explicit solution~\eqref{eq:QPsolu}.
If the constraint is enforced outside set $\C$, then $\C$ is asymptotically stable; cf.~Theorem~\ref{theo:CBF}.
We remark that, potentially, the log-sum-exp formulas could be replaced by other smooth approximations of $\max$ and $\min$.
Furthermore, note that computing exponentials may cause numerical issues if $\kappa$ is too large.
These may be alleviated by scaling CBF candidates by class-$\Ke$ functions; see Remark~\ref{rem:scaling}.






\begin{figure}
\centering
\includegraphics[scale=1]{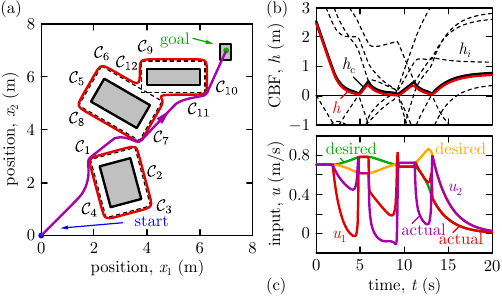}
\vspace{-3mm}
\caption{
Numerical results for Example~\ref{ex:obstacle_multiple}, where a reach-avoid task with multiple obstacles is executed by controller~\eqref{eq:QPsolu} with the proposed CBF~\eqref{eq:combination_CBF}.
}
\vspace{-6mm}
\label{fig:obstacle_multiple}
\end{figure}

\begin{example} \label{ex:obstacle_multiple}
Consider the reach-avoid task of Example~\ref{ex:obstacle_single}, with dynamics~\eqref{eq:single_integrator}, desired controller~\eqref{eq:proportional_controller}, safety filter~\eqref{eq:QPsolu}, and multiple obstacles shown in Fig.~\ref{fig:obstacle_multiple}.
Like in Example~\ref{ex:obstacle_single}, each of the three obstacles yields four safety constraints, leading to ${N=12}$ sets $\C_i$ and functions $h_i$, given by~\eqref{eq:barrier_halfspace}.
The four constraints of each obstacle are linked with OR logic, like in Example~\ref{ex:obstacle_single}, while the constraints of different obstacles are linked with AND: safety is maintained w.r.t.~obstacle~1 AND obstacle~2 AND obstacle~3.
Thus, the safe set:
\begin{equation}
    \Cc \!=\! (\C_1 \cup \C_2 \cup \C_3 \cup \C_4)
    \cap (\C_5 \cup \C_6 \cup \C_7 \cup \C_8)
    \cap (\C_9 \cup \C_{10} \cup \C_{11} \cup \C_{12})
\label{eq:combination_example}
\end{equation}
is given by a ${M=2}$ level specification, combining ${N = 12}$ sets to ${N_1 = 3}$ sets
($\C_1^1$ from sets given by ${J_1^1 = \{1,2,3,4\}}$,
$\C_2^1$ from ${J_2^1 = \{5,6,7,8\}}$ and
$\C_3^1$ from ${J_3^1 = \{9,10,11,12\}}$),
and then to a single set $\Cc$ (via sets given by ${J_1^2 = \{1,2,3\}}$).

The behavior of controller~\eqref{eq:QPsolu} with the proposed CBF candidate~\eqref{eq:combination_CBF} is shown in Fig.~\ref{fig:obstacle_multiple} for ${\Kp=0.5}$, ${\umax = 1}$, ${\kappa=10}$, ${b = \ln 2}$ and ${\alpha(h) = h}$.
The reach-avoid task is successfully accomplished with formal guarantees of safety.
Remarkably, the controller is continuous and explicit, since the control law~\eqref{eq:QPsolu} and CBF formulas~\eqref{eq:combination_CBF}-\eqref{eq:combination_CBF_derivative} are in closed form.
Such explicit controllers are easy to implement and fast to execute.
Note that controller~\eqref{eq:QPmultiple} could also handle multiple obstacles if each obstacle was given by a single CBF candidate.
Yet,~\eqref{eq:QPmultiple} cannot address multi-level safety specifications like~\eqref{eq:combination_example}, while the proposed method can.
\end{example}

\begin{figure}
\centering
\includegraphics[scale=1]{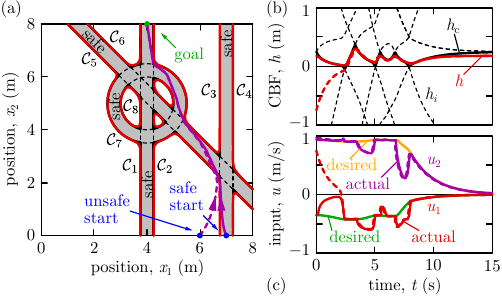}
\vspace{-3mm}
\caption{
Numerical results for Example~\ref{ex:road}, where an agent is driven safely along a road network via controller~\eqref{eq:QPsolu} with the proposed CBF~\eqref{eq:combination_CBF}.
}
\vspace{-6mm}
\label{fig:road}
\end{figure}

\begin{example} \label{ex:road}
Consider the setup of Fig.~\ref{fig:road} where a point agent is driven to a desired location while staying on a road network, with dynamics~\eqref{eq:single_integrator}, desired controller~\eqref{eq:proportional_controller} and safety-critical controller~\eqref{eq:QPsolu}.
Safety is determined by the road geometry.
Each road boundary is related to a set, which is given for straight roads by~\eqref{eq:barrier_halfspace} and for ring roads by:
\begin{equation}
    h_i(x) = \pm \big( \| x - x_i \| - R_i \big).
\end{equation}
Here plus and minus signs stand for the inner and outer circles, respectively, $R_i$ is their radius, and $x_i$ is their center.
Safety must be ensured w.r.t.~boundary~1 AND boundary~2 of each road, while the agent must stay on road~1 OR road~2 OR road~3 OR road~4.
Thus, the combined safe set becomes:
\begin{equation}
    \Cc =
    (\C_1 \cap \C_2) \cup
    (\C_3 \cap \C_4) \cup
    (\C_6 \cap \C_5) \cup
    (\C_7 \cap \C_8).
\end{equation}
That is, we have a ${M=2}$ level specification with ${N = 8}$ sets combined first to ${N_1 = 4}$ sets
(as intersections of sets given by ${J_1^1 = \{1,2\}}$, ${J_2^1 = \{3,4\}}$, ${J_3^1 = \{5,6\}}$, ${J_4^1 = \{7,8\}}$), and then to a single set (as union via ${J_1^2 = \{1,2,3,4\}}$).

The execution of the reach-avoid task with the proposed CBF candidate~\eqref{eq:combination_CBF} and controller~\eqref{eq:QPsolu} is shown in Fig.~\ref{fig:road} for ${\Kp=0.5}$, ${\umax = 1}$, ${\kappa=10}$, ${b = 0}$ and ${\alpha(h) = h}$.
The end result is guaranteed safety (see solid lines).
Moreover, the safe set is attractive: in case of an unsafe, off-road initial condition the agent returns to to the safe set on the road and continues to be safe (see thick dashed lines).
Remarkably, this property was not provided by earlier works like~\cite{wang2016multiobjective}.
\end{example}


\section{CONCLUSION}
\label{sec:concl}

We established a framework to capture complex safety specifications by control barrier functions (CBFs).
The specifications are combinations of state constraints by Boolean logic.
We proposed an algorithmic way to create a single CBF candidate that encodes these constraints and enables efficient safety-critical controllers.
We described the properties of this CBF candidate, and we used simulations to show its ability to tackle nontrivial safety-critical control problems.

\section*{APPENDIX}

\begin{proof}[Proof of Theorem~\ref{theo:feasibility}]
%
Consider the Lagrangian of the feasibility problem~\cite{boyd2004convex} corresponding to the QP~\eqref{eq:QPmultiple}:
\begin{equation}
    L(x,u,\lambda) = - \sum_{i \in I} \lambda_i \Big( \dot{h}_i(x,u) + \alpha_i \big( h(x) \big) \Big),
\end{equation}
with the Lagrange multipliers ${\lambda \!=\! \begin{bmatrix} \lambda_1 \!\!&\!\! \lambda_2 \!\!&\!\! \ldots \!\!&\!\! \lambda_N \end{bmatrix}^\top}$, ${\lambda_i \geq 0}$ ${\forall i \in I}$.
The QP~\eqref{eq:QPmultiple} is feasible if and only if ${\exists u \in \U}$ such that ${L(x,u,\lambda) \leq 0}$ ${\forall \lambda_i \geq 0}$.
With the Lagrange dual function, ${g_{L}(x,\lambda) \!=\! \inf_{u \in \U} L(x,u,\lambda)}$, this means ${g_{L}(x,\lambda) \leq 0}$ ${\forall \lambda_i \geq 0}$.
Since ${g_{L}(x,\lambda) \!=\! -\sum_{i \in I}\! \lambda_i \Big(\! \L{f}{h_i(x)} \!+\! \alpha_i \big( h_i(x) \big) \!\Big)}$ if ${\sum_{i \in I}\! \lambda_i \L{g}{h_i(x)} \!=\! 0}$
and ${g_{L}(x,\lambda) \!=\! -\infty}$ otherwise,~\eqref{eq:CBF_condition_multiple} is equivalent to~${g_{L}(x,\lambda) \leq 0}$ and provides feasibility.
\end{proof}

\begin{proof}[Proof of Theorem~\ref{theo:maxCBF}]
Since the exponential function is monotonous and gives positive value, we have:
\begin{equation}
    {\rm e}^{\kappa \max_{i \in I} h_i(x)} \leq \sum_{i \in I} {\rm e}^{\kappa h_i(x)} \leq
    N {\rm e}^{\kappa \max_{i \in I} h_i(x)},
\label{eq:exp_bound_max}
\end{equation}
that yields~\eqref{eq:hmax_bounds} via~\eqref{eq:hmax} and the monotonicity of $\ln$.
The limit on both sides of~\eqref{eq:hmax_bounds} yields ${\lim_{\kappa \to \infty} h(x) = \max_{i \in I} h_i(x)}$, and consequently ${\lim_{\kappa \to \infty} \C = \bigcup_{i \in I} \C_i}$ holds.
Due to~\eqref{eq:hmax_bounds},
${\max_{i \in I} h_i(x) \geq 0 \implies h(x) \geq 0}$,
therefore
${x \in \bigcup_{i \in I} \C_i \implies x \in \C}$, and ${\C \supseteq \bigcup_{i \in I} \C_i}$ follows.

We prove that $h$ is a CBF by showing that~\eqref{eq:CBF_condition_rewritten} holds.
We achieve this by relating $\L{g}{h(x)}$ and ${\L{f}{h(x)} + \alpha \big( h(x) \big)}$ to $\L{g}{h_i(x)}$ and ${\L{f}{h_i(x)} + \alpha_i \big( h_i(x) \big)}$.
The Lie derivatives are related by~\eqref{eq:Lfgh_multiple},
while the following bound holds for all ${i \in I}$:
\begin{equation}
    \alpha \big( h(x) \big)
    \geq \alpha \big( h_i(x) \big)
    \geq \alpha_i \big( h_i(x) \big),
\label{eq:alpha_condition_rewritten}
\end{equation}
where we used~\eqref{eq:hmax_bounds}
and ${\alpha(r) \geq \alpha_i(r)}$.
Consequently, since ${\sum_{i \in I} \lambda_i(x) = 1}$ and ${\lambda_i(x)>0}$ hold via~\eqref{eq:maxcoeff}, we have:
\begin{equation}
    \L{f}{h(x)} \!+\! \alpha \big( h(x) \big) \geq \sum_{i \in I} \lambda_i(x) \Big( \L{f}{h_i(x)} \!+\! \alpha_i \big( h_i(x) \big) \Big).
\label{eq:Lfh_relationship_max}
\end{equation}
If ${\L{g}{h(x)} = 0}$, we get ${\sum_{i \in I} \lambda_i(x) \L{g}{h_i(x)} = 0}$ based on~\eqref{eq:Lfgh_multiple}, and since~\eqref{eq:CBF_condition_multiple} is assumed to hold,~\eqref{eq:Lfh_relationship_max} finally yields ${\L{f}{h(x)} + \alpha \big( h(x) \big) \geq 0}$.
Thus,~\eqref{eq:CBF_condition_rewritten} holds and $h$ is a CBF.
\end{proof}

\begin{proof}[Proof of Theorem~\ref{theo:minCBF}]
The proof follows that of Theorem~\ref{theo:maxCBF}, with the following modifications.
We replace~\eqref{eq:exp_bound_max} by:
\begin{equation}
    {\rm e}^{-\kappa \min_{i \in I} h_i(x)} \leq \sum_{i \in I} {\rm e}^{-\kappa h_i(x)} \leq
    N {\rm e}^{-\kappa \min_{i \in I} h_i(x)},
\label{eq:exp_bound_min}
\end{equation}
that gives the bound~\eqref{eq:hmin_bounds} via~\eqref{eq:hmin}.
The remaining properties follow from the limit on both sides of~\eqref{eq:hmin_bounds} and from
${h(x) \geq 0 \implies \min_{i \in I} h_i(x) \geq 0}$ according to~\eqref{eq:hmin_bounds}.
\end{proof}

\begin{proof}[Proof of Theorem~\ref{theo:error}]
By leveraging that the exponential function is monotonous,
we write~\eqref{eq:combination_CBF_nonsmooth} equivalently as:
\begin{align}
\begin{split}
    H_{{\rm c},i}^0(x) & = {\rm e}^{\kappa h_i(x)},
    \quad i \in I, \\
    H_{{\rm c},i}^\ell(x) & = \begin{cases}
    \max_{j \in J_i^\ell} H_{{\rm c},j}^{\ell-1}(x) & {\rm if}\ \ell \in \Lor, \\
    \min_{j \in J_i^\ell} H_{{\rm c},j}^{\ell-1}(x) & {\rm if}\ \ell \in \Land, 
    \end{cases}
    \;\; i \in I_\ell, \\
    \hc(x) & = \frac{1}{\kappa} \ln H_{{\rm c},1}^M(x).
\end{split}
\label{eq:combination_CBF_nonsmooth_log}
\end{align}
We compare this with the definition~\eqref{eq:combination_CBF} of $h$.
First, by using the middle row of~\eqref{eq:combination_CBF}, we establish that for all ${x \in \X}$:
\begin{align}
\begin{split}
    H_j^{\ell-1}(x) \leq
    H_i^\ell(x) \leq
    |J_i^\ell| \max_{j \in J_i^\ell} H_j^{\ell-1}(x)
    \quad {\rm if}\ \ell \in \Lor, \\
    \frac{1}{|J_i^\ell|} \min_{j \in J_i^\ell} H_j^{\ell-1}(x) \leq
    H_i^\ell(x) \leq
    H_j^{\ell-1}(x)
    \quad {\rm if}\ \ell \in \Land
\end{split}
\label{eq:exp_bound}
\end{align}
${\forall j \!\in\! J_i^\ell}$ and ${\forall i \!\in\! I_\ell}$.
Then, we relate $H_{{\rm c},i}^\ell$ to $H_i^\ell$ by induction.
For ${\ell \!\geq\! 1}$ we assume that there exist ${\underline{c}_{\ell-1}, \overline{c}_{\ell-1} > 0}$ such that:
\begin{equation}
    \underline{c}_{\ell-1} H_{{\rm c},i}^{\ell-1}(x) \leq
    H_i^{\ell-1}(x) \leq
    \overline{c}_{\ell-1} H_{{\rm c},i}^{\ell-1}(x)
\label{eq:induction_assumption}
\end{equation}
${\forall x \in \X}$ and
${\forall i \in I_{\ell-1}}$.
This is true for ${\ell\!=\!1}$ with ${\underline{c}_0, \overline{c}_0 = 1}$ since ${H_i^0(x) = H_{{\rm c},i}^0(x)}$.
By substituting~\eqref{eq:induction_assumption} into~\eqref{eq:exp_bound}, using the middle row of~\eqref{eq:combination_CBF_nonsmooth_log} and ${|J_i^\ell| \leq \max_{i \in I_\ell} |J_i^\ell|}$,
we get:
\begin{equation}
    \underline{c}_\ell H_{{\rm c},i}^\ell(x) \leq
    H_i^\ell(x) \leq
    \overline{c}_\ell H_{{\rm c},i}^\ell(x)
\label{eq:induction_bound}
\end{equation}
with ${b_\ell = \max_{i \in I_\ell} |J_i^\ell|}$ and:
\begin{align}
\begin{split}
    \underline{c}_\ell =
    \begin{cases}
        \underline{c}_{\ell-1} \!& {\rm if}\ \ell \in \Lor, \\
        \frac{\underline{c}_{\ell-1}}{b_\ell} \!& {\rm if}\ \ell \in \Land,
    \end{cases} \quad
    \overline{c}_\ell =
    \begin{cases}
        b_\ell \overline{c}_{\ell-1} \!& {\rm if}\ \ell \in \Lor, \\
        \overline{c}_{\ell-1} \!& {\rm if}\ \ell \in \Land.
    \end{cases}
\end{split}
\label{eq:induction_coefficient}
\end{align}
By induction,~\eqref{eq:induction_bound} holds for ${\ell=M}$ with
${\underline{c}_M \!=\! \prod_{\ell \in \Land}\! \frac{1}{b_\ell}}$ and
${\overline{c}_M \!=\! \prod_{\ell \in \Lor}\! b_\ell}$.
Taking the logarithm of~\eqref{eq:induction_bound} with ${\ell=M}$ and using the last rows of~\eqref{eq:combination_CBF} and~\eqref{eq:combination_CBF_nonsmooth_log} result in~\eqref{eq:combination_error}.
\end{proof}



\bibliographystyle{IEEEtran}
\bibliography{refs}	

\end{document}